\let\oldnl\nl
\newcommand{\nonl}{\renewcommand{\nl}{\let\nl\oldnl}}
\def\TitleOfAlgo{\@ifnextchar({\@TitleOfAlgoAndComment}{\@TitleOfAlgoNoComment}}
\def\@TitleOfAlgoAndComment(#1)#2{\nonl\hspace*{-1.5em}#2 #1\;}
\def\@TitleOfAlgoNoComment#1{\nonl\hspace*{-1.5em}#1\;}
\newcommand*\patchAmsMathEnvironmentForLineno[1]{
  \expandafter\let\csname old#1\expandafter\endcsname\csname #1\endcsname
  \expandafter\let\csname oldend#1\expandafter\endcsname\csname end#1\endcsname
  \renewenvironment{#1}
  {\linenomath\csname old#1\endcsname}
  {\csname oldend#1\endcsname\endlinenomath}}
  \newcommand*\patchBothAmsMathEnvironmentsForLineno[1]{
  \patchAmsMathEnvironmentForLineno{#1}
  \patchAmsMathEnvironmentForLineno{#1*}}
\newtheorem{theorem}{Theorem}
\newtheorem{lemma}{Lemma}
\newtheorem{corollary}{Corollary}
\newcommand{\etal}{{et~al.}}
\newcommand{\ie}{{i.e.}}
\newcommand{\eg}{{e.g.}}
\newcommand{\later}[1]{}
\newcommand{\old}[1]{}
\title{\textsc{Finding Small Complete Subgraphs Efficiently}
\footnote{
  A preliminary version of this paper appeared in the
  \emph{Proceedings of the 34th International Workshop on Combinatorial Algorithms}
  (IWOCA 2023),  LNCS~13889, Springer, Cham, Switzerland, 2023, pp.~185--196.}
}
\author{
Ke Chen\thanks{%
Department of Computer Science, The Pennsylvania State University, PA, USA.
Email~\texttt{kxc5915@psu.edu}}
 \and
Adrian Dumitrescu\thanks{%
Algoresearch L.L.C., Milwaukee, WI, USA, and 
Research Institute of the University of Bucharest, Romania, and 
Alfr\'ed R\'enyi Institute of Mathematics, Budapest, Hungary. 
Email~\texttt{ad.dumitrescu@algoresearch.org}}
\and
Andrzej~Lingas\thanks{%
  Department of Computer Science, Lund University,
  Sweden.
Email~\texttt{Andrzej.Lingas@cs.lth.se}}
}
\begin{document}

\maketitle

\begin{abstract}
  (I) We revisit the algorithmic problem of finding all triangles in a graph $G=(V,E)$ with $n$ vertices
  and $m$ edges.  According to a result of Chiba and Nishizeki (1985), 
  this task can be achieved by a combinatorial algorithm running in
$O(m \alpha) = O(m^{3/2})$ time, where $\alpha= \alpha(G)$ is the graph arboricity.
We provide a new very simple combinatorial algorithm for finding all triangles in a graph
and show that is amenable to the same running time analysis. 
We derive these worst-case bounds from first principles and with very simple proofs
that do not rely on classic results due to Nash-Williams from the 1960s.
Our experimental results show that our simple algorithm for triangle
listing is substantially faster in practice than that of Chiba and
Nishizeki on all examples of real-world graphs we tried.

\smallskip
(II) We extend our arguments to the problem of finding all small complete subgraphs
of a given fixed size.  We show that the dependency on $m$ and $\alpha$ in the running time
$O(\alpha^{\ell-2} \cdot m)$  of the algorithm of Chiba and Nishizeki for listing all copies of $K_\ell$,
where $\ell \geq 3$, is asymptotically tight.

\smallskip
(III) We give improved arboricity-sensitive running times for counting and/or detection of copies of $K_\ell$,
for small $\ell \geq 4$. A key ingredient in our algorithms is, once again, the algorithm of Chiba and Nishizeki.
Our new algorithms are faster than all previous algorithms in certain high-range arboricity intervals
for every $\ell \geq 7$.

\smallskip
\textbf{\small Keywords}: triangle, subgraph detection/counting,
graph arboricity, rectangular matrix multiplication.

\end{abstract}

\section{Introduction} \label{sec:intro}

The problem of deciding whether a given graph $G=(V,E)$ with $n$ vertices
and $m$ edges contains a complete subgraph on $k$ vertices
is among the most natural and easily stated algorithmic graph problems.
If the subgraph size $k$ is part of the input, this is the \textsc{Clique} problem
which is NP-complete~\cite{GJ79}.
For every fixed $k$, determining whether a given graph $G=(V,E)$ contains a complete subgraph
on $k$ vertices can be accomplished by a brute-force algorithm running in $O(n^k)$ time.

For $k=3$, deciding whether a graph contains a triangle and finding one if it does,
or counting all triangles in a graph,
can be done in $O(n^\omega)$ time by the algorithm of Itai and Rodeh~\cite{IR78},
where $\omega$ is the exponent of matrix multiplication~\cite{AV21,CW90},
and $\omega < 2.372$ is the best known bound~\cite{DWZ23,VXXZ24}.
The algorithm compares $M$ and $N=M^2$, where $M$ is the graph adjacency matrix;
if there is a pair of indexes $i,j$ such that $M[i,j]=N[i,j]=1$, then there is a triangle based on edge $ij$.
An equivalent characterization is: $G$ contains a triangle iff $M^3$ has a non-zero entry on its main diagonal.
See~\cite[Ch.~10]{Mat10} for a short exposition of this elegant method.
Alternatively, this task can be done in $O(m^{2\omega/(\omega+1)}) =O(m^{1.407})$ time
by the algorithm of Alon, Yuster, and Zwick~\cite{AYZ97}, which deals with vertices of high and low degree
separately. Itai and Rodeh~\cite{IR78} and also Papadimitriou and Yannakakis~\cite{PY81}
as well as Chiba and Nishizeki~\cite{CN85} showed that triangles in planar graphs can be found in $O(n)$ time. 

For $k=4$, deciding whether a graph contains a $K_4$ and finding one if it does
(or counting all $K_4$'s in a graph) can be done in $O(n^{\omega+1}) = O(n^{3.373})$ time by the algorithm
of Alon, Yuster, and Zwick~\cite{AYZ97},
in $O(n^{3.252})$ time by the algorithm of Eisenbrand and Grandoni~\cite{EG04}, and in
$O(m^{(\omega+1)/2})=O(m^{1.686})$ time by the algorithm of Kloks, Kratsch, and M{\"{u}}ller~\cite{KKM00}.

In contrast to the problem of detecting the existence of subgraphs of a certain kind,
the analogous  problem of listing \emph{all} such subgraphs has usually higher complexity,
as expected. For example, finding all triangles in a given graph
(each triangle appears in the output list) can be accomplished
in $O(m^{3/2})$ time and with $O(n^2)$ space by an extended
version of the algorithm of Itai and Rodeh~\cite{IR78}. 
Bar-Yehuda and Even~\cite{BE82} improved the space complexity of the algorithm
from $O(n^2)$ to $O(n)$ by avoiding the use of the adjacency matrix.

Chiba and Nishizeki~\cite{CN85} further refined the time complexity
in terms of graph arboricity, which is the minimum number of edge-disjoint
forests into which its edges can be partitioned. Their algorithm lists all triangles in a graph
in $O(m \alpha)$ time, where $\alpha$ is the arboricity of the graph.
They also showed that $\alpha =O(\sqrt{m})$; consequently, the running time is $O(m^{3/2})$.
Since there are graphs $G$ with $\alpha(G)= \Theta(m^{1/2})$,
this does not improve the worst-case dependence on $m$.
Moreover, since there are graphs with $\Theta(m^{3/2})$ triangles, see, \eg, \cite{Ri02},
this dependence cannot be improved.
For every fixed $\ell \geq 3$, the same authors gave an algorithm for
listing all copies of $K_\ell$ in $O(\alpha^{\ell-2} \cdot m) = O(m^{\ell/2})$ time. 
If $G$ is planar,  $\alpha(G) \leq 3$ (see~\cite[p.~124]{Har72}),
so their algorithm runs in $O(m)=O(n)$ (\ie, linear) time on planar graphs.

We distinguish several variants of the general problem of finding triangles in a given
undirected graph $G=(V,E)$:
(i)~the triangle \emph{detection} problem is that of deciding
if $G$ is triangle-free (resp., finding a triangle in $G$ otherwise);
(ii)~the triangle \emph{counting} problem is that of determining the total number 
  of triangles in $G$;
(iii)~the triangle \emph{listing} problem is that of listing (finding, reporting) \emph{all} triangles in $G$,
  with each triangle appearing in the output list. 
Obviously, any algorithm for listing all triangles can be easily transformed into one for
triangle detection or into one for listing a specified number of triangles,
by stopping after the required number of triangles have been output.

\paragraph{Our results.}
We obtain the following results for the problem of finding small complete subgraphs of a given size.
Each of our algorithms for deciding whether a graph contains a given complete subgraph (item (iii) below)
also finds a suitable witness, if it exists.

\begin{enumerate} [(i)] \itemsep 1pt

\item We provide a new combinatorial algorithm for finding all triangles in a graph
  running in $O(m \alpha) = O(m^{3/2})$ time, where $\alpha= \alpha(G)$ is the graph arboricity
   (Algorithm \textsc{Hybrid} in Section~\ref{sec:hybrid}).
We derive these worst-case bounds from first principles and with very simple proofs
that do not rely on classic results due to Nash-Williams from the 1960s.
We present also an experimental comparison of our new simple algorithm
for finding all triangles with that of Chiba and Nishizeki. It shows
that our algorithm is substantially faster in practice on all
considered examples of real-world graphs.

\item  For every $n$, $b \leq n/2$ and a fixed $\ell \geq 3$,
  there exists a graph $G$ of order $n$ with $m$ edges and $\alpha(G) \leq b$
  that has $\Omega(\alpha^{\ell-2} \cdot m)$ copies of $K_\ell$
  (Lemma~\ref{lem:ell} in Section~\ref{sec:constructions}).
  As such, the dependency on $m$ and $\alpha$, in the running time
  $O(\alpha^{\ell-2} \cdot m)$ for listing all copies of $K_\ell$,
  is asymptotically tight. 

\item We give improved arboricity-sensitive running times for counting and/or detection of copies of $K_\ell$,
for small $\ell \geq 4$ (Section~\ref{sec:small}).
A key ingredient in our algorithms is the algorithm of Chiba and Nishizeki.
Our new algorithms beat all previous algorithms in certain high-range arboricity intervals
for every $\ell \geq 7$. We also provide up-to-date running times based on rectangular matrix
multiplication times.

\end{enumerate}

\paragraph{Preliminaries and notation.} 
Let $G=(V,E)$ be an undirected graph. The \emph{neighborhood} of a vertex $v \in V$ is
the set $N(v) =\{w \ : \  (v,w) \in E\}$ of all adjacent vertices, and its cardinality 
$\deg(v)=|N(v)|$ is the \emph{degree} of $v$ in $G$.  
A \emph{clique} in a graph $G=(V,E)$ is a subset $C \subset V$ of vertices,
each pair of which is connected by an edge in $E$. The \textsc{Clique}
problem is to find a clique of maximum size in $G$. 
An \emph{independent set} of a graph $G=(V,E)$ is a subset $I \subset V$ of vertices
such that no two of them are adjacent in $G$. 

Unless specified otherwise:
(i)~all algorithms discussed are assumed to be deterministic;
(ii)~all graphs mentioned are undirected; 
(iii)~all logarithms are in base~$2$. 

It should be noted that the asymptotically fastest algorithms for matrix multiplication
have mainly a theoretical importance and are otherwise largely impractical~\cite[Ch.~10.2.4]{Man89}.
As such, any algorithm that employs fast matrix multiplication (FMM) inherits this undesirable feature.
Whereas the notion of ``combinatorial'' algorithm does not have a formal or precise definition,
such an algorithm is usually simpler to implement and practically efficient, see~\cite{VW18}.
Combinatorial algorithms are sometimes theoretically less efficient than non-combinatorial ones
but practically more efficient since they tend to be simpler to implement.

\paragraph{Graph parameters.}
For a graph $G$, its \emph{arboricity} $\alpha(G)$ is the minimum number of edge-disjoint
forests into which $G$ can be decomposed~\cite{CN85}. 
For instance, it is known and easy to show that $\alpha(K_n)= \lceil n/2 \rceil$.
A characterization of arboricity is provided by the following classic
result~\cite{NW61,NW64,Tu61}; see also~\cite[p.~60]{Die97}.

\begin{theorem} {\rm (Nash-Williams 1964; Tutte 1961)} \label{thm:k-forests}
A multigraph $G=(V,E)$ can be partitioned into at most $k$ forests if and only if
every set $U \subseteq V$ induces at most $k(|U|-1)$ edges. 
\end{theorem}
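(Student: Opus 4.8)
The plan is to prove both directions of the equivalence, with the forward (easy) direction being essentially a counting argument and the reverse (hard) direction requiring a matroid-union or augmenting-path construction. For the forward direction, suppose $G=(V,E)$ decomposes into $k$ forests $F_1,\dots,F_k$. Fix any $U\subseteq V$. Each forest $F_i$, when restricted to the vertex set $U$, is again a forest, hence has at most $|U|-1$ edges among the vertices of $U$. Since every edge of $G$ induced by $U$ lies in exactly one $F_i$, summing over $i$ gives at most $k(|U|-1)$ induced edges, as claimed. This direction is short and purely combinatorial.

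For the reverse direction, assume every $U\subseteq V$ induces at most $k(|U|-1)$ edges; we must exhibit a partition of $E$ into $k$ forests. I would argue by an extremal/augmenting argument: take a collection $F_1,\dots,F_k$ of edge-disjoint forests in $G$ that together cover as many edges of $E$ as possible, and suppose for contradiction that some edge $e=xy$ is left uncovered. The key step is to show that, under the density hypothesis, one can always reroute the forests to absorb $e$ — i.e., find an alternating sequence of edge swaps (adding $e$ to some $F_i$, which creates a cycle, removing another edge of that cycle, adding it to some $F_j$, and so on) that strictly increases the number of covered edges, contradicting maximality. To make this precise one tracks, for each forest $F_i$, the partition of $V$ into the vertex sets of its connected components; adding $e$ to $F_i$ fails to create a forest exactly when $x$ and $y$ lie in the same component of $F_i$. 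Define $U$ to be the union of all vertex sets reachable from $\{x,y\}$ by following component-connectivity in the various $F_i$'s through this alternating process; a careful bookkeeping shows that if no augmentation is possible then every $F_i$ already connects all of $U$, so each $F_i$ contributes at least $|U|-1$ edges inside $U$, and together with $e$ itself this forces strictly more than $k(|U|-1)$ edges induced by $U$, contradicting the hypothesis.

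The main obstacle is making the augmenting-path / closure argument in the previous paragraph rigorous: one needs to set up the right notion of "reachable" vertex set $U$ so that the no-augmentation assumption genuinely pins down $|E(U)| \ge k(|U|-1)+1$, and one must verify that the swap operations preserve the forest property at every stage and strictly improve the objective. This is the standard but somewhat delicate heart of the matroid-union theorem specialized to graphic matroids (the $k$ forests are a maximum-size common independent set in the union of $k$ copies of the graphic matroid), and it can alternatively be quoted directly from matroid union. Since the paper explicitly wishes to avoid leaning on the Nash-Williams machinery elsewhere, here it suffices to cite the classical references \cite{NW61,NW64,Tu61} and \cite[p.~60]{Die97} for the full details, or to present the augmenting argument above in full; either way, the forward direction is immediate and the reverse direction is where all the work lies.
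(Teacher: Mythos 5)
The paper does not prove Theorem~\ref{thm:k-forests} at all: it is quoted as a classical result with pointers to \cite{NW61,NW64,Tu61} and \cite[p.~60]{Die97}, and the surrounding text even emphasizes that the authors' own contributions are designed to \emph{avoid} relying on it. So there is no in-paper proof to compare against, and deferring the hard direction to the classical references, as you ultimately do, is exactly consistent with the paper's treatment.

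On the merits of your sketch: the forward direction is complete and correct --- each forest restricted to $U$ is a forest on $|U|$ vertices, hence has at most $|U|-1$ edges, and summing over the $k$ edge-disjoint forests gives the bound. The reverse direction is the standard matroid-union/augmenting-exchange argument, and your outline identifies the right objects, but as written it still has the gap you yourself flag: asserting that ``every $F_i$ already connects all of $U$'' and therefore ``contributes at least $|U|-1$ edges inside $U$'' requires that $F_i[U]$ (the restriction of $F_i$ to $U$) be connected, not merely that $U$ lie in one component of $F_i$ --- a forest can connect two vertices of $U$ via a path leaving $U$, in which case those edges are not induced by $U$ and the count fails. Closing this requires defining $U$ as the vertex set of all edges reachable from $e$ under the exchange relation and proving the closure property that makes each $F_i[U]$ a spanning tree of $U$; that is the genuine content of the theorem. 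Since the paper itself only cites the result, your proposal is acceptable as a citation-plus-easy-direction, but it should not be presented as a self-contained proof of the hard direction.
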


The \emph{degeneracy} $d(G)$ of an undirected graph $G$ is the smallest number $d$
for which there exists an \emph{acyclic orientation} of $G$ in which all the
out-degrees are at most $d$. The degeneracy of a graph is linearly related
to its arboricity, \ie, $\alpha(G)=\Theta(d(G))$; more precisely
$ \alpha(G) \leq d(G) \leq 2\alpha(G)-1$; see~\cite{AYZ97,Epp94,LW70,ZN99}.

It is worth noting that high arboricity is not an indication for the graph having many (or any) triangles.
Indeed, the complete bipartite graph $K_{\lfloor n/2 \rfloor,\lceil n/2 \rceil}$, or any of its dense subgraphs has
arboricity $\Theta(n)$ but no triangles.

\subsection{Triangle Finding Algorithms} \label{sec:alg}

To place our algorithm in Section~\ref{sec:hybrid} in a proper context,
we first present a summary of previous work in the area of triangle finding and enumeration.

\paragraph{The algorithms of Itai and Rodeh (1978).}
The authors gave three methods for finding triangles. 
Initially intended for triangle detection, the first algorithm~\cite{IR78} runs in $O(m^{3/2})$ time.
It can be extended to list all triangles within the same overall time and works as follows:

\smallskip
\begin{algorithm}[H]
  \DontPrintSemicolon
  \TitleOfAlgo{\textsc{Itai-Rodeh}$(G)$}
  \KwIn{an undirected graph $G=(V,E)$}
  Find a spanning tree for each connected component of $G$\;
  List all triangles containing at least one tree edge\;
  Delete the tree edges from $G$ and go to Step 1\;
\end{algorithm}
\smallskip

The second is a randomized algorithm that checks whether there is an edge contained in a triangle. 
It runs in $O(mn)$ worst-case time and $O(n^{5/3})$ expected time. 
The third algorithm relies on Boolean matrix multiplication and runs in $O(n^\omega)$ time.
(Boolean matrix multiplication can be reduced to standard matrix multiplication;
see~\cite[Ch.~10.2.4]{Man89} for details.)  

\paragraph{The algorithm of Chiba and Nishizeki (1985).}
The algorithm uses a vertex-iterator approach for listing all triangles in $G$.
It relies on the observation that each triangle containing a vertex $v$ corresponds
to an edge joining two neighbors of $v$.
The graph is represented with doubly-linked adjacency lists and mutual references
between the two stubs of an edge ensure that each deletion takes constant time.  
A more compact version described by Ortmann and Brandes~\cite{OB14} is given below.

\smallskip
\begin{algorithm}[H]
  \DontPrintSemicolon
  \TitleOfAlgo{\textsc{Chiba-Nishizeki}$(G)$}
  \KwIn{an undirected graph $G=(V,E)$}
  Sort vertices such that $\deg(v_1) \geq \deg(v_2) \geq \ldots \deg(v_n)$\;
  \For{$u=v_1,v_2,\ldots,v_{n-2}$}{
    \lForEach{vertex $v \in N(u)$}{mark $v$}
    \ForEach{vertex $v \in N(u)$}{
      \ForEach{vertex $w \in N(v)$}{
        \lIf{$w$ is marked}{output triangle $uvw$}}
    unmark $v$
  }
    $G \gets G - u$
    }
  \end{algorithm}
\smallskip

The authors~\cite{CN85} showed that their algorithm runs in $O(m \alpha)$ time.
As a corollary, the number of triangles is $O(m \alpha)$ as well.
The $O(m^{3/2})$ upper bound on the number of triangles in a graph
is likely older than these references indicate. In any case, other proofs
are worth mentioning~\cite{ELRS17,KMPT12}, including algebraic ones~\cite{Ri02}.
We derive yet another one in Section~\ref{sec:hybrid}.

\begin{corollary} {\rm (\cite{IR78,CN85})} \label{cor:no-of-t}
  For any graph $G$ of order $n$ with $m$ edges and arboricity $\alpha$,
  $G$ contains $O(m \alpha)=O(m^{3/2})$ triangles.
\end{corollary}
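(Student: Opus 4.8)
\emph{Proof proposal.} The statement is, as its name indicates, a corollary of the running‑time bounds recalled just above, so the plan is to read the triangle count off those algorithms rather than to prove anything new. First I would argue that Algorithm \textsc{Chiba-Nishizeki}, executed on $G$, outputs every triangle of $G$ and does so only a constant number of times: in the iteration whose iterator vertex $u$ is the first vertex of the triangle in the sorted order, the triangle $\{u,v,w\}$ is emitted exactly twice (once as $uvw$, once as $uwv$), and immediately afterwards $u$ is deleted from $G$, so that triangle is never a triangle of the current graph again. I would check the analogous constant‑multiplicity statement for the listing version of Algorithm \textsc{Itai-Rodeh} — in fact there each triangle is listed exactly once, namely in the first iteration in which one of its three edges is chosen into the spanning forest. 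Given constant multiplicity, and since emitting one triangle costs $\Omega(1)$ time, the number of triangles of $G$ is at most a constant times the running time; this yields $O(m\alpha)$ from \textsc{Chiba-Nishizeki} and $O(m^{3/2})$ from \textsc{Itai-Rodeh}.

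To finish I must show that these two estimates coincide, i.e.\ that $\alpha=\alpha(G)=O(\sqrt m)$. I would get this from Theorem~\ref{thm:k-forests} by the standard balancing trick: if $G$ cannot be covered by $\alpha-1$ forests, then some $U\subseteq V$ induces more than $(\alpha-1)(|U|-1)$ edges, whence, using $|E(U)|\le\binom{|U|}{2}$ and $|E(U)|\le m$,
\[
\alpha-1\;<\;\frac{|E(U)|}{|U|-1}\;\le\;\min\!\Big\{\tfrac{|U|}{2},\,\tfrac{m}{|U|-1}\Big\}\;\le\;\sqrt m,
\]
the last step by the bound $\min\{a,b\}\le\sqrt{ab}$ together with $|U|/(|U|-1)\le 2$. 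Hence $\alpha=O(\sqrt m)$ and $m\alpha=O(m^{3/2})$.

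I would also point out a self‑contained route that avoids quoting either algorithm and serves as a warm‑up for Section~\ref{sec:hybrid}: by definition of degeneracy there is an acyclic orientation of $G$ in which every out‑degree is at most $d=d(G)\le 2\alpha-1$. Restricted to any triangle this orientation is a transitive tournament, so the triangle has a unique vertex with two outgoing triangle‑edges; charging each triangle to that vertex together with its (unordered) pair of out‑neighbours bounds the number of triangles by
\[
\sum_{v\in V}\binom{d^{+}(v)}{2}\;\le\;\frac{d-1}{2}\sum_{v\in V}d^{+}(v)\;=\;\frac{d-1}{2}\,m\;=\;O(m\alpha),
\]
after which $\alpha=O(\sqrt m)$ again gives the $O(m^{3/2})$ form.

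There is essentially no obstacle here. The only two points that call for any care are (i)~verifying that the quoted algorithm reports no triangle with super‑constant multiplicity — otherwise the count could in principle exceed the running time — and (ii)~the elementary estimate $\alpha=O(\sqrt m)$, which the displayed one‑line computation settles.
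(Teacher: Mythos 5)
Your main argument --- reading the triangle count off the running times of the two cited listing algorithms, together with the Nash--Williams-based bound $\alpha=O(\sqrt m\,)$ --- is exactly the paper's route: the statement is presented as a corollary of \cite{IR78,CN85}, and the surrounding text derives it in the same way (quoting $\alpha\le\lceil(2m+n)^{1/2}/2\rceil$ from Theorem~\ref{thm:k-forests} where you derive $\alpha<\sqrt m+1$). Two remarks. First, your ``multiplicity'' caveat is inverted: to bound the number of triangles by the running time you only need each triangle to be output \emph{at least} once; super-constant multiplicity would inflate the running time, not the count. (Also, in the \textsc{Chiba-Nishizeki} pseudocode each triangle is output once per iteration of $u$, not twice, since $v$ is unmarked before $w$ is processed.) Neither point is a gap. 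Second, your self-contained orientation argument --- each triangle has a unique source in an acyclic orientation with out-degrees at most $d(G)\le 2\alpha-1$, so the count is at most $\sum_v\binom{d^+(v)}{2}\le\frac{d-1}{2}m=O(m\alpha)$ --- is correct and is a genuinely different route from the one the paper develops in Section~\ref{sec:hybrid}, where the count is bounded by $F(G)=\sum_{e=uv}\min(\deg u,\deg v)\le 2m\alpha$ and Lemma~\ref{lem:F} gives $F(G)\le 4m^{3/2}$ from first principles; your version trades that edge-function decomposition for the standard degeneracy--arboricity relation, whereas the paper's Lemma~\ref{lem:F} has the additional feature of obtaining the $O(m^{3/2})$ form without invoking Nash--Williams at all.
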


Ortmann and Brandes~\cite{OB14} gave a survey of other approaches, including edge-iterators.
Algorithms in this category iterate over all edges and intersect the neighborhoods of
the endpoints of each edge. 
A straightforward neighborhood merge requires $O(\deg(u) + \deg(v))$ time per edge $uv$,
but this is not good enough to list all triangles in $O(m^{3/2})$ time. 
Two variants developed by Shanks~\cite{Sch07} use $O(m)$ extra space to represent
neighborhoods in hash sets and obtain the intersection in  $O(\min(\deg(u),\deg(v)))$ time,
which suffices for listing all triangles in $O(m^{3/2})$ time.

\paragraph{The algorithm of Alon, Yuster and Zwick (1997).}
The authors showed that deciding whether a graph contains a triangle and finding
one if it does (or counting all triangles in a graph)
can be done in $O(m^{2\omega/(\omega+1)}) =O(m^{1.407})$ time~\cite{AYZ97}.
The idea is to find separately triangles for which at least one vertex has low degree (for an
appropriately set threshold) and triangles  whose all three vertices have high degree.
Triangles of the latter type are handled by applying matrix multiplication to a smaller
subgraph.

\paragraph{Recent algorithms.}
Bj{\"{o}}rklund~\etal~\cite{BPWZ14} obtained output-sensitive algorithms
for finding (and listing) all triangles in a graph;
their algorithms are tailored for dense and sparse graphs.
Several approaches~\cite{EGMT17,KPP15}
  provide asymptotic improvements by taking advantage of the bit-level
  parallelism offered by the word-RAM model.
  Although they do not appear to be very practical,
  these methods asymptotically improve on the $O(m \alpha +k)$ running time of the
  earlier algorithms (in particular that in~\cite{CN85}).
  If the number of triangles is small, Zechner and Lingas~\cite{ZL14} showed how to list
  all triangles in $O(n^\omega)$ time.

\section{A Simple Hybrid Algorithm for Listing All Triangles} \label{sec:hybrid}

In this section we present a new algorithm for listing all triangles. 
While its general idea is not new, the specific hybrid data structure therein does not appear
to have been previously considered. Using both an adjacency list representation 
and an adjacency matrix representation of the graph yields a very time-efficient
neighborhood merge (intersection) procedure.
Let $V=\{1,2,\ldots,n\}$ and let $M$ be the adjacency matrix of $G$. 
A triangle $ijk$ with $i<j<k$ is reported when edge $ij$ is processed, and so each  
triangle is reported exactly once.
For each edge $ij$, the algorithm scans the adjacency list
of the endpoint of lower degree (among $i$ and $j$)
and for each neighbor it checks for the needed entry in the adjacency matrix $M$
corresponding to the third triangle edge. In the pseudocode below, $ADJ(v)$ denotes the adjacency list of vertex $v$. 

\smallskip
\begin{algorithm}[H]
  \DontPrintSemicolon
  \TitleOfAlgo{\textsc{Hybrid}$(G)$}
  \KwIn{an undirected graph $G=(V,E)$}
  \ForEach{edge $ij \in E$, ($i<j$)}{
    \lIf{$\deg(i) \leq \deg(j)$}{$x\gets i$, $y\gets j$}    
      \lElse{$x\gets j$, $y\gets i$}
    \ForEach{$k \in ADJ(x)$}{
      \lIf{$j<k$ and $M(y,k)=1$}{report triangle $ijk$}
    }
  }
\end{algorithm}
\smallskip

We subsequently show that the algorithm runs in time $O(m \alpha)$.
Whereas the space used is quadratic,
the hybrid algorithm appears to win by running time and simplicity.
In particular, no additional data structures nor hashing are used, no sorting
(by degree or otherwise) and no doubly linked lists are needed, etc.
Similar algorithms that use hashing in order to run in linear space are $6$ to $8$ times slower
than their ``unoptimized'' counterparts on most graphs with about $5 \times 10^6$ vertices
in the study in~\cite{Sch07,SW05}.
``The two algorithms \emph{edge-iterator-hashed} and \emph{forward-hashed},
which use hashed data structures for every node, perform badly'' on such graphs,
the authors write~\cite[p.~8]{SW05}.

\subsection{The Analysis} \label{subsec:analysis}

Define the following function $f$ on the edges of $G$. For $uv=e \in E$, let

\begin{equation} \label{eq:f}
f(e) = \min(\deg(u),\deg(v))   \text{ and } F(G) = \sum_{e \in E} f(e).
\end{equation}

There are at most $f(e)$ triangles based on edge $e=ij$ and overall at most 
$F(G)$ triangles in $G$. The runtime of the algorithm is proportional to this quantity;
the space used is quadratic.
A short and elegant decomposition argument by Chiba and Nishizeki~\cite[Lemma~2]{CN85}
shows that
\begin{equation} \label{eq:decomposition}
  F(G) \leq 2 m \alpha,
\end{equation}
thus our algorithm runs in time $O(m \alpha)$. 
The same analysis applies to the algorithm of Chiba and Nishizeki~\cite{CN85}.
By Theorem~\ref{thm:k-forests}, the above authors deduced that 
$\alpha(G) \leq \lceil (2m+n)^{1/2} /2 \rceil$,
which implies that  $\alpha(G) =O(m^{1/2})$ for a connected graph $G$. 
As such, both algorithms run in $O(m^{3/2})$ time on any graph. 

Lemma~\ref{lem:F} below shows how to bypass the Nash-Williams arboricity bound (Theorem~\ref{thm:k-forests})
and deduce the $O(m^{3/2})$ upper bound for listing all triangles in a graph from first principles.

\begin{lemma} \label{lem:F}
Let $G$ be a graph on $n$ vertices with $m$ edges. Then
\[ F(G) \leq 4 m^{3/2}. \]
\end{lemma}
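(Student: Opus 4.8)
The plan is to establish the inequality by a high-degree/low-degree dichotomy, in the spirit of Alon--Yuster--Zwick, but applied to $F(G)$ directly rather than to triangle counting. Fix a threshold $\tau$, to be chosen as a constant times $\sqrt m$ at the very end, and call a vertex \emph{heavy} if $\deg(v) > \tau$ and \emph{light} otherwise. Partition the edge set as $E = E_L \cup E_H$, where $E_L$ is the set of edges having at least one light endpoint and $E_H$ the set of edges both of whose endpoints are heavy. Then $F(G) = \sum_{e \in E_L} f(e) + \sum_{e \in E_H} f(e)$, and it suffices to bound each of the two sums.

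The light part is immediate: if $e = uv \in E_L$ then $f(e) = \min(\deg u, \deg v) \le \tau$, since one of $u,v$ is light; as $|E_L| \le m$ this gives $\sum_{e \in E_L} f(e) \le m\tau$. For the heavy part, note first that $\sum_{v} \deg(v) = 2m$, so fewer than $2m/\tau$ vertices can be heavy. Orient every edge of $E_H$ from its endpoint of smaller degree to that of larger degree, breaking ties by vertex index; then $f(e)$ equals the degree of the tail of $e$. The out-degree $d^+(v)$ of any vertex $v$ under this orientation is at most the number of heavy neighbours of $v$, hence $d^+(v) < 2m/\tau$. Summing the contribution of each edge over its tail,
\[
\sum_{e \in E_H} f(e) \;=\; \sum_{v\text{ heavy}} \deg(v)\, d^+(v) \;<\; \frac{2m}{\tau}\sum_{v\text{ heavy}} \deg(v) \;\le\; \frac{2m}{\tau}\cdot 2m \;=\; \frac{4m^2}{\tau}.
\]

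Combining the two bounds yields $F(G) \le m\tau + 4m^2/\tau$. Choosing $\tau = 2\sqrt m$ balances the two terms and gives $F(G) \le 2m^{3/2} + 2m^{3/2} = 4m^{3/2}$, exactly the claimed bound; in particular this reproves Corollary~\ref{cor:no-of-t} and the $O(m^{3/2})$ running time of Algorithm \textsc{Hybrid} without invoking Theorem~\ref{thm:k-forests}. The one place that needs an idea is the estimate on $\sum_{e \in E_H} f(e)$: the naive move of bounding $f(e)$ on a heavy--heavy edge by the \emph{larger} of the two degrees is worthless, and the point of the low-to-high orientation is precisely that it charges each such edge only the \emph{smaller} degree while keeping the relevant per-vertex count---the number of heavy neighbours---below $2m/\tau$. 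Everything else is bookkeeping, and the constant could be shaved (for instance by charging each heavy--heavy edge to both endpoints, or via the spectral estimate $\sum_{uv\in E}\sqrt{\deg u\,\deg v} \le \sqrt 2\, m^{3/2}$), but $4m^{3/2}$ already suffices.
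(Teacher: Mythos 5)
Your proof is correct and is essentially the paper's own (folklore) argument: the same threshold $2\sqrt m$, the same split into edges with a low-degree endpoint (contributing at most $m\cdot 2\sqrt m$) and edges between high-degree vertices (of which there are fewer than $\sqrt m$, so each heavy degree is charged at most $\sqrt m$ times against a total degree sum of $2m$). The low-to-high orientation is just a clean formalization of the paper's phrase that ``the degree of each of them contributes to $F(G)$ at most $\sqrt m$ times,'' so nothing further is needed.
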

\begin{proof} (folklore)
There are two types of edges $uv$:
\begin{enumerate} \itemsep 1pt 
\item $\min(\deg(u),\deg(v)) \leq 2\sqrt{m}$
\item $\min(\deg(u),\deg(v)) > 2\sqrt{m}$
\end{enumerate}

There are at most $m$ edges of type 1 and each contributes at most $2\sqrt{m}$
to $F(G)$, so the total contribution from edges of this type is at most $2 m^{3/2}$. 

Each edge of type 2 connects two nodes of degree $> 2 \sqrt{m}$, and there are
at most $2m/(2 \sqrt{m})= \sqrt{m}$ such nodes. The degree of each of them thus contributes
to $F(G)$ at most $\sqrt{m}$ times and the sum of all degrees of such nodes is at most $2m$.
It follows that the total contribution from edges of type 2 is at most $2 m^{3/2}$. 

Overall, we have $F(G) \leq 4 m^{3/2}$. 
\end{proof}

\paragraph{Remarks.} Lemma~\ref{lem:F} immediately gives an upper bound of $4 m^{3/2}$
on the number of triangles in a graph.
It is in fact known~\cite{Ri02} that this number is at most $(2^{1/2}/3)\, m^{3/2}$
(and this bound is sharp).  More generally, for every fixed $\ell \geq 3$, the number of copies of $K_\ell$
is $O(\alpha^{\ell-2} \cdot m) =O(m^{\ell/2})$; see~\cite[Thm.~3]{CN85}. 

\subsection{Experimental Results} \label{subsec:experiment}
Although \textsc{Hybrid} and \textsc{Chiba-Nishizeki} have the same asymptotic running time,
we claim that our algorithm is faster in practice due to its simplicity.
To support this claim, we implemented both algorithms in C++ and compared their running time on
complete graphs $K_{1000}$ and $K_{2000}$, as well as
a collection of real-world graphs with various sizes.
The code is publicly available at \url{https://gitlab.com/kanatos92/triangle-listing}.
The graphs are taken from the SNAP datasets~\cite{LK14} with self-loops and multi-edges removed.
Graph statistics are listed in Table~\ref{tab:graphs}.

\begin{table}[!ht]
  \centering
  \begin{tabular}{|c|r|r|r|}
    \hline
    Graph & Vertices & Edges & Triangles \\
    \hline
    $K_{1000}$ & 1,000 & 499,500 & 166,167,000\\
    $K_{2000}$ & 2,000 & 1,999,000 & 1,331,334,000\\
    facebook-combined & 4,039 & 88,234 & 1,612,010\\
    musae-facebook & 22,470 & 170,823 & 794,953\\
    cit-HepPh & 34,546 & 420,877 & 1,276,868\\
    large-twitch & 168,114 & 6,797,557 & 54,148,895\\
    loc-gowalla & 196,591 & 950,327 & 2,273,138\\
    com-dblp & 317,080 & 1,049,866 & 2,224,385\\
    higgs-social-network & 456,626 & 12,508,413 & 83,023,401\\
    web-BerkStan & 685,230 & 6,649,470 & 64,690,980\\
    com-youtube & 1,134,890 & 2,987,624 & 3,056,386\\
    as-skitter & 1,696,415 & 11,095,298 & 28,769,868\\
    roadNet-CA & 1,965,206 & 2,766,607 & 120,676\\
    \hline
  \end{tabular}
  \caption{Statistics of the graphs.}
  \label{tab:graphs}
\end{table}

For each algorithm, we measure the construction time taken to build the needed graph representation,
and the listing time taken to enumerate all the triangles.
To minimize the effect of system fluctuations, I/O operations are not timed.
In particular, each input graph is preloaded into memory as a list of edges, and the triangles
are stored in a container (instead of being sent directly to output) during enumeration.
The experiments are conducted on an AWS server with an AMD EPYC 7R13 processor and 512 GB memory.
For a fair comparison, both algorithms are run on a single processor.
However, it is worth noting that the loop in \textsc{Hybrid} can easily be made to run in parallel
(by adding a single line of pragma directive from OpenMP, for example) whereas it is not
so trivial for \textsc{Chiba-Nishizeki} because the algorithm modifies its data structures
in each iteration.

Both algorithms are repeated $10$ times on each graph, the average time is summarized in Table~\ref{tab:experiment}.
Observe that, although for some graphs \textsc{Hybrid} takes more time in the construction phase,
it is the overall winner for all the inputs in the table.
In most cases, the running time of \textsc{Hybrid} is less than half of the running time of \textsc{Chiba-Nishizeki}.

\begin{table}[!ht]
  \centering
  \begin{tabular}{|c|rrr|rrr|}
    \hline
    \multirow{2}{*}{Graph} & \multicolumn{3}{|c|}{\textsc{Chiba-Nishizeki}} & \multicolumn{3}{|c|}{\textsc{Hybrid}}\\
    \cline{2-7}
          & Construction & Listing & Total & Construction & Listing & Total\\
    \hline
    $K_{1000}$  & $37.4$ & $11221.8$ & $11259.2$ & $\mathbf{8.0}$ & $\mathbf{5778.3}$ & $\mathbf{5786.3}$ \\
    $K_{2000}$  & $146.6$ & $120570.2$ & $120716.8$ & $\mathbf{34.6}$ & $\mathbf{46121.5}$ & $\mathbf{46156.1}$ \\
    facebook-combined  & $6.0$ & $89.9$ & $95.9$ & $\mathbf{1.7}$ & $\mathbf{54.3}$ & $\mathbf{56.0}$ \\
    musae-facebook  & $16.7$ & $104.3$ & $121.0$ & $\mathbf{13.3}$ & $\mathbf{44.5}$ & $\mathbf{57.8}$ \\
    cit-HepPh  & $45.8$ & $888.4$ & $934.2$ & $\mathbf{41.2}$ & $\mathbf{202.1}$ & $\mathbf{243.3}$ \\
    large-twitch  & $913.9$ & $48107.8$ & $49021.7$ & $\mathbf{898.3}$ & $\mathbf{18424.4}$ & $\mathbf{19322.7}$ \\
    loc-gowalla  & $\mathbf{138.3}$ & $2500.8$ & $2639.1$ & $150.5$ & $\mathbf{951.4}$ & $\mathbf{1101.9}$ \\
    com-dblp  & $\mathbf{173.4}$ & $1189.5$ & $1362.9$ & $224.1$ & $\mathbf{454.8}$ & $\mathbf{678.9}$ \\
    higgs-social-network  & $2387.1$ & $144523.7$ & $146910.8$ & $\mathbf{2363.5}$ & $\mathbf{54425.8}$ & $\mathbf{56789.3}$ \\
    web-BerkStan  & $1219.5$ & $16841.5$ & $18061.0$ & $\mathbf{923.3}$ & $\mathbf{5828.8}$ & $\mathbf{6752.1}$ \\
    com-youtube  & $\mathbf{557.6}$ & $7915.6$ & $8473.2$ & $617.7$ & $\mathbf{3106.1}$ & $\mathbf{3723.8}$ \\
    as-skitter  & $\mathbf{2532.2}$ & $47561.0$ & $50093.2$ & $2577.4$ & $\mathbf{17494.4}$ & $\mathbf{20071.8}$ \\
    roadNet-CA  & $\mathbf{690.0}$ & $1187.3$ & $1877.3$ & $1210.8$ & $\mathbf{226.9}$ & $\mathbf{1437.7}$ \\
    \hline
  \end{tabular}
  \caption{Average running time of both algorithms in milliseconds. The faster ones in each row are highlighted in bold.}
  \label{tab:experiment}
\end{table}

\section{Constructions} \label{sec:constructions}

\begin{lemma} \label{lem:m}
  For every $n\geq3$ and $3 \leq m \leq {n \choose 2}$, there exists a graph $G$ of order $n$ with $m$ edges
  that contains $\Theta(m^{3/2})$ triangles.
\end{lemma}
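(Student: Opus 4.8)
The plan is to exhibit, for each target edge count $m$ in the allowed range, a graph on $n$ vertices with exactly $m$ edges whose triangle count is $\Theta(m^{3/2})$. The natural candidate is a clique padded with isolated vertices: take $K_t$ on $t = \Theta(\sqrt{m})$ vertices together with $n-t$ isolated vertices. Since $\binom{t}{2} = \Theta(t^2) = \Theta(m)$ and $K_t$ has $\binom{t}{3} = \Theta(t^3) = \Theta(m^{3/2})$ triangles, this handles the case where $m$ is (up to constants) a triangular number $\binom{t}{2}$. The upper bound $\Theta(m^{3/2})$ is immediate from Lemma~\ref{lem:F} (or Corollary~\ref{cor:no-of-t}), so only the matching lower bound needs construction.

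The one real issue is hitting the edge count $m$ \emph{exactly} for every $m$ in $[3, \binom{n}{2}]$, not just at triangular numbers. First I would choose the largest $t$ with $\binom{t}{2} \le m$; then $m - \binom{t}{2} \le t$, so the remaining $r := m - \binom{t}{2}$ edges can be added by attaching a new vertex $w$ (drawn from the isolated ones, which exist as long as $t < n$; the boundary case $t=n$, i.e. $m=\binom{n}{2}$, is just $K_n$ itself) to $r$ of the vertices of $K_t$. This adds exactly $r$ edges and creates $\binom{r}{2} \ge 0$ additional triangles through $w$, so the triangle count is at least $\binom{t}{3} = \Theta(m^{3/2})$ and at most $O(m^{3/2})$ by the upper bound. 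One should check the small-$m$ regime ($m = 3,4,5$) separately or note that $t \ge 3$ there, so $K_t$ already contains at least one triangle and the asymptotics are vacuously fine for constants.

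To make the constants clean I would record the elementary estimate $\binom{t}{2} \le m$ with $t$ maximal forces $\binom{t+1}{2} > m$, hence $t = \Theta(\sqrt m)$ (concretely $t \ge \sqrt{m}$ fails but $t \ge \sqrt{2m} - 1$ and $t \le \sqrt{2m}$ hold), and therefore $\binom{t}{3} \ge \frac{(t-2)^3}{6} = \Omega(m^{3/2})$. Combined with $F(G) \le 4m^{3/2}$ from Lemma~\ref{lem:F}, this pins the triangle count of the constructed graph at $\Theta(m^{3/2})$, completing the proof. I expect no genuine obstacle here; the only place demanding a little care is the off-by-a-triangular-number adjustment and confirming there are enough spare vertices, i.e. that $n$ is large enough to host $K_t$ plus the one attachment vertex, which holds precisely because $m \le \binom{n}{2}$.
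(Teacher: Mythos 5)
Your proposal is correct and essentially identical to the paper's proof: both pick the largest clique size $x$ with $\binom{x}{2}\le m$, place a $K_x$ among the $n$ vertices, distribute the $m-\binom{x}{2}\le x$ leftover edges harmlessly (the paper says ``arbitrarily,'' you route them through one extra vertex), and combine the lower bound $\binom{x}{3}=\Omega(m^{3/2})$ with the upper bound from Lemma~\ref{lem:F}. The small additional care you take over exact edge counts and spare vertices is sound but not a different argument.
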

\begin{proof}
  Let $3 \leq x \leq n$ be the unique integer such that ${x \choose 2} \leq m < {x+1 \choose 2}$. 
  Note that $x =\Theta(\sqrt{m})$. Let $G=(V,E)$ where $V=V_1 \cup V_2$, $|V_1|=x$, and $|V_2|=n-x$.
Set $V_1$ induces a complete subgraph and set $V_2$ induces an independent set in $G$.
Add $m- {x \choose 2}$ edges  to $G$ arbitrarily so that $G$ has $m$ edges.
Let $T$ be the set of triangles in $G$. We have
\[ |T| \geq  {x \choose 3} = \Omega(m^{3/2}). \]
By Lemma~\ref{lem:F}, $G$ contains $\Theta(m^{3/2})$ triangles, as required.
\end{proof}

\begin{lemma} \label{lem:ell}
  Let $\ell \geq 3$ be a fixed integer. 
  For every $n$ and $b \leq n/2$, there exists a graph $G$ of order $n$ with $\alpha \leq b$
  that has $\Omega(\alpha^{\ell-2} \cdot m)$ copies of $K_\ell$, where $m$ is the number of edges
  in $G$ and $\alpha$ is the arboricity of $G$.
\end{lemma}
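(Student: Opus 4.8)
The plan is to take a vertex-disjoint union of complete graphs of exactly the right size, padded out with isolated vertices so that the order is $n$. Set $q=2b$, pick any parameter $1\le s\le\lfloor n/q\rfloor$, and let $G$ consist of $s$ vertex-disjoint copies of $K_q$ together with $n-sq$ isolated vertices; this is a legitimate graph of order $n$ because $q=2b\le n$. Directly from the definition of arboricity, the arboricity of a vertex-disjoint union is the maximum of the arboricities of its components (forests in the components can be merged, and conversely restrict), and isolated vertices contribute nothing, so $\alpha(G)=\alpha(K_q)=\lceil q/2\rceil=b$. Thus $\alpha:=\alpha(G)=b\le b$ and the number of edges is $m=s\,{q\choose 2}$.

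Next I would count the copies of $K_\ell$. Since the copies of $K_q$ are vertex-disjoint, every $K_\ell$ in $G$ lies inside a single $K_q$, so there are exactly $s\,{q\choose\ell}$ of them. Comparing with $\alpha^{\ell-2}m=b^{\ell-2}\,s\,{q\choose 2}$ reduces everything to an elementary estimate of the ratio
\[
\frac{{q\choose\ell}}{{q\choose 2}}\;=\;\frac{2}{\ell!}\,(q-2)(q-3)\cdots(q-\ell+1),
\]
a product of $\ell-2$ factors, each at least $q-\ell+1=2b-\ell+1$. Assuming $b\ge\ell$, every factor exceeds $b$, so the ratio is at least $\frac{2}{\ell!}\,b^{\ell-2}$; hence $G$ has at least $\frac{2}{\ell!}\,b^{\ell-2}m=\frac{2}{\ell!}\,\alpha^{\ell-2}m=\Omega(\alpha^{\ell-2}m)$ copies of $K_\ell$, with the hidden constant depending only on the fixed integer $\ell$. (Taking $s=\lfloor n/(2b)\rfloor$ makes $m=\Theta(nb)$, the most the construction allows; $s=1$ already suffices and gives $m={2b\choose 2}=\Theta(b^2)$ and $\Theta(b^\ell)=\Theta(\alpha^{\ell-2}m)$ copies of $K_\ell$.)

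Finally I would dispose of the degenerate ranges of $b$. If $\lceil\ell/2\rceil\le b<\ell$ the same construction works verbatim: now $b$ and $\ell$ are both constants, $m$ and the number of $K_\ell$'s are both $\Theta(s)$, and the ratio is a positive constant depending on $\ell$, so the bound still reads $\Omega(\alpha^{\ell-2}m)$. If $b<\lceil\ell/2\rceil$, then $\alpha(K_\ell)=\lceil\ell/2\rceil>b$, so no graph of arboricity $\le b$ contains a $K_\ell$ at all; one simply takes $G$ edgeless, so that $m=0$ and the claimed bound holds trivially. I do not anticipate a real obstacle here — the construction is immediate and the only facts used are $\alpha(K_q)=\lceil q/2\rceil$ and the binomial ratio estimate; the one point meriting a moment's care is simply choosing a clique on $2b$ vertices so that its arboricity is exactly $b$ (matching the bound $\alpha\le b$), together with the observation that the ``$\Omega$'' is permitted to absorb a constant depending on the fixed parameter $\ell$.
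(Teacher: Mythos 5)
Your proposal is correct and follows essentially the same route as the paper: both constructions are disjoint unions of cliques of size $\Theta(b)$ (you use $s$ copies of $K_{2b}$ padded with isolated vertices; the paper uses $k$ copies of $K_b$ plus an independent set of size $b/2$ joined to everything), and both rest on $\alpha(K_t)=\lceil t/2\rceil$ together with counting $\ell$-subsets inside each clique. Your write-up is in fact slightly more careful than the paper's in handling arbitrary $n$ and the degenerate ranges $b<\ell$.
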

  \begin{proof}
We may assume  that $b$ is even and $b/2 \ | \ n$. Suppose that $n=(2k+1)b/2$. 
Let $G=(V,E)$, where $V=V_1 \cup V_2$, and $|V_1|=kb$, $|V_2|=b/2$. 
Set $V_1$ consists of $k$ complete subgraphs of order $b$ whereas $V_2$ is an
independent set of size $b/2$ in $G$. 
All edges in $V_1 \times V_2$ are present in $G$. See Fig.~\ref{fig:f1}.

\begin{figure}[htbp]
\centering
\includegraphics[scale=0.75]{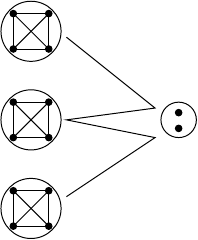}
\caption{Illustration of the graph $G$ for $k=3$, $b=4$.}
\label{fig:f1}
\end{figure}

Let $K$ be the set of complete subgraphs of order $\ell$. 
We have
\begin{align*}
m &=|E|= k {b \choose 2} + k \frac{b^2}{2}= \Theta(k \, b^2), \\
|K| &=   k {b \choose \ell} + k {b \choose \ell-1} \frac{b}{2} = \Omega(k \, b^\ell)
= \Omega(b^{\ell-2} \, m),
\end{align*}
Note that the arboricity of $G$ is bounded from above by $b$. Indeed,
$ \alpha \leq \frac{b}{2} + \frac{b}{2} =b$, 
since $\alpha(K_b)=b/2$ and the edges in $V_1 \times V_2$ can be partitioned into
$b/2$ stars centered at the vertices in $V_2$. 
\end{proof}

\section{Finding Small Complete Subgraphs Efficiently} \label{sec:small}

In this section we address the problem of detecting the presence of $K_\ell$ for a fixed $\ell \geq 4$.
We combine and refine several approaches existent in the literature of the last $40$ years
to obtain faster algorithms in general and for a large class of graphs with high arboricity.
In particular, we will use the algorithm of  Chiba and Nishizeki~\cite{CN85} 
for listing all copies of $K_\ell$ in $O(\alpha^{\ell-2} \cdot m)$ time. 
Our algorithms are formulated for the purpose of \emph{counting} 
but they can be easily adapted for the purpose of \emph{detection}.

The basic idea is as follows: since listing is generally slower than counting or detection, one can use
a \emph{hybrid}---decomposition based---approach that combines listing in Part A
with counting or detection in Part B.

Recall that $\omega < 2.372$ is the exponent of matrix multiplication~\cite{AV21,CW90,VXXZ24},
namely the infimum of numbers $\tau$ such that two $n \times n$ real matrices can be multiplied in
$O(n^\tau)$ time (operations). Similarly, let $\omega(p,q,r)$ stand for the infimum of numbers $\tau$ such that
an $n^p \times n^q$ matrix can be multiplied by an $n^q \times n^r$ matrix in $O(n^{\tau})$ time (operations). 
For simplicity and as customary (see, \eg, \cite{AYZ97,NP85}),
we write that two $n \times n$ matrices can be multiplied in $O(n^\omega)$ time,
since this does not affect our results; and similarly when multiplying rectangular matrices.

\paragraph{The extension method.}
The algorithm solves the detection (or counting) problem for a complete subgraph $K_\ell$.
Let $T(n,m,\ell)$ denote the running time of the algorithm for a graph with $n$ vertices and $m$ edges. 
Write $\ell=j_1 +j_2$, for a suitable choice $j_1,j_2 \geq 2$.
At the beginning, we run the algorithm of  Chiba and Nishizeki to form a list of subgraphs isomorphic
to $K_{j_1}$. Then, for each subgraph $G_1=(V_1,E_1)$ on the list:
(i)~we construct the subgraph $G_2$ of $G$ induced by vertices in $V \setminus V_1$
that are adjacent to all vertices in $V_1$; and 
(ii)~we count the subgraphs isomorphic to $K_{j_2}$ in $G_2$ (or find one such subgraph); 
this is a (possibly \emph{recursive}) call of the same procedure on a smaller instance.
In other words, for each subgraph isomorphic to $K_{j_1}$, we count the number of extensions to $K_\ell$.
Another formulation of this method can be found in~\cite{KL22}.

There are $O(\alpha^{j_1-2} \cdot m)$ copies of $K_{j_1}$ and they can be
found in $O(\alpha^{j_1-2} \cdot m)$ time. 
For each fixed copy of $K_{j_1}$, the time spent in $G_2$ is at most $T(n,m,j_2)$, and so the
overall time satisfies the recurrence
\[ T(n,m,\ell) = O(\alpha^{j_1-2} \cdot m \cdot T(n,m,j_2)). \]
Each copy of $K_\ell$ is generated exactly ${\ell \choose j_1}$ times and so the total count
needs to be divided by this number in the end. 

\paragraph{The triangle method and its refinement.}
The algorithm solves the detection (or counting) problem for a complete subgraph $K_\ell$.
Ne{\v{s}}et{\v{r}}il and Poljak~\cite{NP85} showed an efficient reduction of detecting and counting copies
of any complete subgraph to the aforementioned method of Itai and Rodeh~\cite{IR78} for triangle detection
and counting.  To start with, consider the detection of complete subgraphs of size $\ell =3j$.
For a given graph $G$ with $n$ vertices, construct an auxiliary graph $H$ with $O(n^j)$ vertices,
where each vertex of $H$ is a complete subgraph of order $j$ in $G$.
 Two vertices $V_1,V_2$ in $H$ are connected by an edge  if $V_1 \cap V_2 =\emptyset$ and
 all edges in $V_1 \times V_2$ are present in $G$; equivalently, $V_1 \cup V_2$ is a complete subgraph on
 $|V_1| + |V_2|$ vertices.
The detection (or counting) of triangles in $H$ yields an algorithm for the detection (or counting) of
$K_{\ell}$'s in $G$, running in $O(n^{j \omega})$ time.
For detecting complete subgraphs of size $\ell= 3j +i$, where $i \in \{1,2\}$,
the algorithm can be adapted so that it runs in $O(n^{j\omega +i})$ time.

The triangle method can be extended to larger subgraphs, \ie, if $\ell= k j +i$,
for some $k \geq 4$, and $i \in \{0,1,\ldots,k-1\}$, the algorithm tries to detect a $K_k$ in $H$;
see also~\cite{KL22}. We give a few applications at the end of this section.

Here we focus on $k=3$. For convenience, define the following functions on integer inputs
\begin{align}
 \beta(\ell) &= \omega(\lfloor \ell/3 \rfloor, \lceil (\ell-1)/3 \rceil,  \lceil \ell/3 \rceil), \label{eq:beta}  \\
 \gamma(\ell) &= \lfloor \ell/3 \rfloor \omega + \ell \pmod 3. \label{eq:gamma}
\end{align}

With this notation, the algorithm runs in $O(n^{\gamma(\ell)})$ time.
Two decades later, Eisenbrand and Grandoni~\cite{EG04} refined the triangle method by using fast algorithms
for rectangular matrix multiplication instead of those for square matrix multiplication.
It partitions the graph into three parts of roughly the same size: $\ell_1=\lfloor \ell/3 \rfloor$,
$\ell_2=\lceil (\ell-1)/3 \rceil$, $\ell_3 = \lceil \ell/3 \rceil$.  
The refined algorithm runs in time  $O(n^{\beta(\ell)})$ time.
If the rectangular matrix multiplication is carried out via the straightforward partition into square blocks
and fast square matrix multiplication (see, \eg, \cite[Exercise~4.2-6]{CLRS09}), one recovers the
time complexity of the algorithm of Ne{\v{s}}et{\v{r}}il and Poljak; that is,
$\beta(\ell) \leq \gamma(\ell)$, see~\cite{EG04} for details. 
Eisenbrand and Grandoni~\cite{EG04} showed that the above inequality is strict for a certain range:
if $\ell \pmod 3 =1$ and $\ell \leq 16$, or $\ell \pmod 3=2$.
In summary, their algorithm is faster than that of Ne{\v{s}}et{\v{r}}il and Poljak in these cases.
We subsequently refer to their method as the \emph{refined  triangle method}. 
Another extension of the triangle method is considered in~\cite{KL22}. 

\paragraph{A problem of Kloks, Kratsch, and M\"{u}ller.} The authors asked~\cite{KKM00}
whether there is an $O(m^{\ell \omega/6})$ algorithm for deciding whether a graph contains a $K_\ell$,
if $\ell$ is a multiple of $3$. Eisenbrand and Grandoni showed that this is true for every multiple of $3$
at least $6$. Indeed, by their Theorem 2~\cite{EG04}, this task can be accomplished in time
$O(m^{\beta(\ell)/2})$ for every $\ell \geq 6$, with $\beta(\ell)$ as in~\eqref{eq:beta}.
If $\ell =3j$, where $j \geq 2$, then
\[ \beta(\ell) =  \omega(\lfloor \ell/3 \rfloor, \lceil (\ell-1)/3 \rceil,  \lceil \ell/3 \rceil) =
\omega(j,j,j) = j \cdot \omega(1,1,1)= j \omega. \]
It follows that the running time is $O(m^{\beta(\ell)/2}) = O(m^{j \omega /2}) = O(m^{\ell \omega /6})$. 
The proof of the theorem is rather involved. Here we provide an alternative simpler argument
that also yields an arboricity-sensitive bound, item (i) below.      

\paragraph{General case derivations.}
We first consider the general cases: (i)~$\ell=3j$, $j \geq 3$;
(ii)~$\ell=3j+1$, $j \geq 2$; and (iii)~$\ell=3j+2$, $j \geq 2$.
It will be evident that our algorithms provide improved bounds for every $\ell \geq 7$. 
For a given $j \geq 2$, we shall consider the interval
\[ I_j =\left(\frac{\omega-1}{j(3 -\omega)  + 2(\omega-1)}, \, \frac12 \right). \]

\begin{enumerate} [(i)] \itemsep 1.5pt

\item $\ell=3j$, $j \geq 3$. We use the triangle method with a refined calculation.
The vertices of the auxiliary graph $H$ are
subgraphs isomorphic to $K_j$. By the result of Chiba and Nishizeki~\cite{CN85}, $H$ has 
$O(\alpha^{j-2} \cdot m)$ vertices. The algorithm counts triangles in $H$ in time proportional to 
\[ \left(\alpha^{j-2} \cdot m \right)^\omega = \alpha^{(j-2) \omega} \cdot m^\omega. \]
For $\ell=9,12,15$ (the entries in Table~\ref{tab:summary}), these runtimes are
$O(\alpha^\omega \cdot m^\omega)$, $O(\alpha^{2\omega} \cdot m^\omega)$, and $O(\alpha^{3\omega} \cdot m^\omega)$, 
respectively.

Since $\alpha = O(m^{1/2})$, the above expression is bounded from above as follows:
\[ \alpha^{(j-2) \omega} \cdot m^\omega =O\left( \left(m^{(j-2)/2} \cdot m\right)^\omega \right)
= O\left(m ^{j \omega/2}\right) = O \left(m^{\ell \omega/6} \right). \]

Next, we show that for a certain high-range of $\alpha$, the new bound
$\alpha^{(j-2) \omega} \cdot m^\omega$ beats all previous bounds,
namely $m^{j \omega/2}$ and $\alpha^{3j-2} \cdot m$, for $j \geq 3$ (or $\ell =9,12,15,\ldots$). 
Let $\alpha=\Theta(m^x)$, where $x \in I_j$. 
We first verify that
\begin{align*}
\alpha^{(j-2) \omega} \cdot m^\omega \ll m^{j \omega/2},   \text{ or } 
((j-2)x+1)\omega < j \omega/2,
\end{align*}
\later{
\begin{align*}
\alpha^{(j-2) \omega} \cdot m^\omega &\ll m^{j \omega/2},   \text{ or } \\
((j-2)x+1)\omega &< j \omega/2,
\end{align*}
}
which holds for $x<1/2$.
Secondly, we verify that
  \begin{align*}
  \alpha^{(j-2) \omega} \cdot m^\omega &\ll \alpha^{3j-2} \cdot m, \text{ or } 
  ((j-2)x+1)\omega < (3j-2)x+1, \text{ or } \\
  x&> \frac{\omega-1}{j(3 -\omega)  + 2(\omega-1)},
  \end{align*}
  \later{
  \begin{align*}
  \alpha^{(j-2) \omega} \cdot m^\omega &\ll \alpha^{3j-2} \cdot m, \text{ or } \\
  ((j-2)x+1)\omega &< (3j-2)x+1, \text{ or } \\
  x&> \frac{\omega-1}{j(3 -\omega)  + 2(\omega-1)},
  \end{align*}
  }
which holds by the choice of the interval $I_j$. 
In particular, for $\ell=9$, this occurs for $x \in \left(\frac{\omega-1}{7 -\omega}, \frac12 \right) \supset (0.297,0.5)$;
for $\ell=12$, this occurs for $x \in \left(\frac{\omega-1}{10 -2\omega}, \frac12 \right) \supset (0.262,0.5)$;
for $\ell=15$, this occurs for $x \in \left(\frac{\omega-1}{13 -3\omega}, \frac12 \right) \supset (0.234,0.5)$.
Moreover, if $\omega=2$, as conjectured, these intervals extend to $(1/5,1/2)$, $(1/6,1/2)$, and
$(1/7,1/2)$, respectively. 

\item $\ell=3j+1$, $j \geq 2$. The refined triangle method~\cite{EG04} with $\ell_1=j$, $\ell_2=j$, $\ell_3=j+1$,
  leads to rectangular matrix multiplication
$[O(\alpha^{j-2} m) \times O(\alpha^{j-2} m)] \cdot [O(\alpha^{j-2} m) \times O(\alpha^{j-1} m)]$.
Its complexity is at most $O(\alpha)$ times that of the square matrix multiplication with dimension $\alpha^{j-2} m$,
the latter of which is $O(\alpha^{(j-2)\omega} \cdot m^\omega)$. It follows that
\[ T(n,m,\ell) = O(\alpha \cdot \alpha ^{(j-2)\omega} \cdot m^\omega) = O(\alpha^{(j-2) \omega+1} \cdot m^\omega). \] 
For $\ell=7,10,13,16$ (the entries in Table~\ref{tab:summary}), these runtimes are
$O(\alpha \cdot m^\omega)$, $O(\alpha^{\omega +1} \cdot m^\omega)$, $O(\alpha^{2\omega + 1} \cdot m^\omega)$,
and $O(\alpha^{3\omega + 1} \cdot m^\omega)$, respectively.

As before, we show that for a certain high-range of $\alpha$, the new bound
$\alpha^{(j-2) \omega +1} \cdot m^\omega$ beats all previous bounds,
namely $m^{(j \omega+1)/2}$ and $\alpha^{3j-1} \cdot m$, for $j \geq 2$ (or $\ell =7,10,13,\ldots$). 
Let $\alpha=\Theta(m^x)$, where $x \in I_j$. 
We first verify that
  \begin{align*}
\alpha^{(j-2) \omega +1} \cdot m^\omega \ll m^{(j \omega+1)/2},   \text{ or } 
((j-2)x+1)\omega < j \omega/2, 
  \end{align*}
\later{
  \begin{align*}
\alpha^{(j-2) \omega +1} \cdot m^\omega &\ll m^{(j \omega+1)/2},   \text{ or } \\
((j-2)x+1)\omega &< j \omega/2, 
  \end{align*}
} 
which holds for $x<1/2$.
Secondly, we verify that
\begin{align*}
  \alpha^{(j-2) \omega+1} \cdot m^\omega &\ll \alpha^{3j-1} \cdot m, \text{ or } 
  ((j-2)x+1)\omega +x < (3j-1)x+1, \text{ or } \\
  x&> \frac{\omega-1}{j(3 -\omega)  + 2(\omega-1)},
  \end{align*}
which holds by the choice of the interval $I_j$. 
In particular, for $\ell=7$, this occurs for $x \in \left(\frac{\omega-1}{4}, \frac12 \right) \supset (0.344,0.5)$;
for $\ell=10$, this occurs for $x \in \left(\frac{\omega-1}{7 -\omega}, \frac12 \right) \supset (0.297,0.5)$;
for $\ell=13$, this occurs for $x \in \left(\frac{\omega-1}{10 -2\omega}, \frac12 \right) \supset (0.262,0.5)$.
Moreover, if $\omega=2$, as conjectured, these intervals extend to $(1/4,1/2)$, $(1/5,1/2)$, and
$(1/6,1/2)$, respectively. 

\item $\ell=3j+2$, $j \geq 2$. The refined triangle method~\cite{EG04} with $\ell_1=j$, $\ell_2=j+1$, $\ell_3=j+1$,
  leads to rectangular matrix multiplication
$[O(\alpha^{j-2} m) \times O(\alpha^{j-1} m)] \cdot [O(\alpha^{j-1} m) \times O(\alpha^{j-1} m)]$.
Its complexity is at most $O(\alpha^2)$ times that of the square matrix multiplication with dimension $\alpha^{j-2} m$,
the latter of which is $O(\alpha^{(j-2)\omega} \cdot m^\omega)$. It follows that
\[ T(n,m,\ell) = O(\alpha^2 \cdot \alpha ^{(j-2)\omega} \cdot m^\omega) = O(\alpha^{(j-2) \omega+2} \cdot m^\omega). \] 
For $\ell=8,11,14$ (the entries in Table~\ref{tab:summary}), these runtimes are
$O(\alpha^2 \cdot m^\omega)$, $O(\alpha^{\omega +2} \cdot m^\omega)$, and $O(\alpha^{2\omega + 2} \cdot m^\omega)$, 
respectively. 

As before, we show that  for a certain high-range of $\alpha$, this bound beats all previous bounds,
namely $m^{(j \omega+2)/2}$ and $\alpha^{3j} \cdot m$, for $j \geq 2$ (or $\ell =8,11,14,\ldots$). 
Let $\alpha=\Theta(m^x)$, where $x \in I_j$. 
We first verify that
\begin{align*}
\alpha^{(j-2) \omega +2} \cdot m^\omega \ll m^{(j \omega+2)/2},   \text{ or } 
((j-2)x+1)\omega < j \omega/2,
\end{align*}
which holds for $x<1/2$.
Secondly, we verify that
\begin{align*}
  \alpha^{(j-2) \omega+2} \cdot m^\omega &\ll \alpha^{3j} \cdot m, \text{ or } 
  ((j-2)x+1)\omega +2x < 3jx+1, \text{ or } \\
  x&> \frac{\omega-1}{j(3 -\omega)  + 2(\omega-1)},
  \end{align*}
which holds by the choice of the interval $I_j$. 
In particular, for $\ell=8$, this occurs for $x \in \left(\frac{\omega-1}{4}, \frac12 \right) \supset (0.344,0.5)$;
for $\ell=11$, this occurs for $x \in \left(\frac{\omega-1}{7 -\omega}, \frac12 \right) \supset (0.297,0.5)$;
for $\ell=14$, this occurs for $x \in \left(\frac{\omega-1}{10 -2\omega}, \frac12 \right) \supset (0.262,0.5)$.
Moreover, if $\omega=2$, as conjectured, these intervals extend to $(1/4,1/2)$, $(1/5,1/2)$, and
$(1/6,1/2)$, respectively.

The bound for $\ell=8$, $O(\alpha^2 \cdot m^\omega)$, is superseded by the bound
$O\left(\alpha^{\omega+1} m^{\frac{\omega+1}{2}} \right)$, using another (so-called ``the edge count'') method. 

\end{enumerate}

\paragraph{Running time derivations for small $\ell$.}
The previous general case derivations together with the instantiations below yield the running times listed in
Table~\ref{tab:summary}.

\smallskip
\begin{table}[htbp]
  \centering
  \begin{tabular}{|c|c|c|} \toprule
    $\ell$ & Previous best & This paper \\ \midrule
    $3$ & $O( \alpha \cdot m)$ \cite{CN85}, $O(n^{2.372})$ \cite{IR78}, $O(m^{1.407})$ \cite{AYZ97} &  \\
    $4$ & $O( \alpha^2 \cdot m)$ \cite{CN85}, $O(n^{3.334})$ \cite{EG04}, $O(m^{1.686})$ \cite{KKM00} & $O(n^{3.251})$ \\
    $5$ & $O( \alpha^3 \cdot m)$ \cite{CN85}, $O(n^{4.220})$, $O(m^{2.147})$ \cite{EG04} & $O(n^{4.086})$ \\
    $6$ & $O( \alpha^4 \cdot m)$ \cite{CN85}, $O(n^{4.751})$ \cite{EG04}, $O(m^{2.372})$ \cite{EG04} & \\
    $7$ & $O( \alpha^5 \cdot m)$ \cite{CN85}, $O(n^{5.714})$ \cite{EG04}, $O(m^{2.857})$ \cite{EG04} & $O(m^{2.795})$, $O(\alpha \cdot m^{2.372})$ \\
    $8$ & $O( \alpha^6 \cdot m)$ \cite{CN85}, $O(m^{3.372})$ \cite{EG04,KKM00} & $O(m^{3.199})$, $O(\alpha^{3.372} \cdot m^{1.686})$ \\ 
    $9$ &  $O( \alpha^7 \cdot m)$ \cite{CN85}, $O(m^{3.558})$ \cite{EG04,KKM00} & $O(\alpha^{2.372} \cdot m^{2.372})$ \\
    $10$ &  $O( \alpha^8 \cdot m)$ \cite{CN85}, $O(m^{4.058})$ \cite{EG04,KKM00} & $O(m^{3.976})$, $O(\alpha^{3.372} \cdot m^{2.372})$ \\
    $11$ &  $O( \alpha^9 \cdot m)$ \cite{CN85}, $O(m^{4.558})$  \cite{EG04,KKM00} & $O(m^{4.372})$, $O(\alpha^{4.372} \cdot m^{2.372})$ \\
    $12$ & $O( \alpha^{10} \cdot m)$ \cite{CN85}, $O(m^{4.744})$ \cite{EG04,KKM00} & $O(\alpha^{4.744}  \cdot m^{2.372})$, $O(\alpha^{6.744} \cdot m^{1.686})$ \\
    $13$ & $O( \alpha^{11} \cdot m)$ \cite{CN85}, $O(m^{5.160})$ \cite{EG04,KKM00} & $O(\alpha^{5.744}  \cdot m^{2.372})$ \\
    $14$ & $O( \alpha^{11} \cdot m)$ \cite{CN85}, $O(m^{5.553})$ \cite{EG04,KKM00} & $O(m^{5.552})$, $O(\alpha^{6.744}  \cdot m^{2.372})$ \\
    $15$ & $O( \alpha^{13} \cdot m)$ \cite{CN85}, $O(m^{5.930})$ \cite{EG04,KKM00} & $O(\alpha^{7.116}  \cdot m^{2.372})$ \\ 
    $16$ & $O( \alpha^{14} \cdot m)$ \cite{CN85}, $O(m^{6.338})$ \cite{EG04,KKM00} & $O(\alpha^{8.115}  \cdot m^{2.372})$, $O(\alpha^{10.115} \cdot m^{1.686})$ \\  \bottomrule
  \end{tabular}
\smallskip
  \caption{Running time comparison for finding/counting complete subgraphs.
    The column on new results includes bounds in terms of $m$,  based on up-to-date matrix multiplication times, 
    and new arboricity-sensitive bounds in terms of $m$ and $\alpha$; the entries for $\ell=4,5$ in terms of $n$ 
    are also derived in~\cite{KL22}.}
\label{tab:summary}
\end{table}

\begin{enumerate} [(i)] \itemsep 1.5pt

\item $\ell=3$. The algorithms of Itai and Rodeh~\cite{IR78}, and Alon, Yuster, and Zwick~\cite{AYZ97}
  for triangle counting/detection run in $O(n^\omega)$ and $O(m^{2\omega/(\omega+1)}) =O(m^{1.407})$ time,
  respectively.

\item $\ell=4$. The refined triangle method~\cite{EG04} with $\ell_1=1$, $\ell_2=1$, $\ell_3=2$, leads to
  rectangular matrix multiplication $[n \times n] \cdot [n \times m]$ or
  $[n \times n] \cdot [n \times n^2]$ in the worst case. 
  Since $\omega(1,1,2) = \omega(1,2,1) \leq 3.251$,
  it follows that $T(n,m,4) = O(n^{3.251})$. (See also~\cite[Table~3]{GU18}; 
  this entry has been slightly improved in~\cite{VXXZ24}.)
  In particular, this gives a positive answer to a question of Alon, Yuster, and Zwick~\cite{AYZ97}, who asked
  for an improvement of their $O(n^{\omega+1})= O(n^{3.372})$ upper bound for $K_4$ detection. 
  In terms of $m$, by~\cite[Table~1]{EG04}, $T(n,m,4) = O(m^{1.682})$, but this entry appears unjustified. 

\item $\ell=5$. The refined triangle method~\cite{EG04} with $\ell_1=1$, $\ell_2=2$, $\ell_3=2$, leads to
  rectangular matrix multiplication $[n \times m] \cdot [m \times m]$ or $[n \times n^2] \cdot [n^2 \times n^2]$ in the
  worst case. According to~\cite[Table~1]{VXXZ24}, $\omega(1,2,2) = 2 \omega(1,0.5,1) \leq 2 \cdot 2.043 = 4.086$,
  hence $T(n,m,5) = O(n^{4.086})$.

On a side note, the extension method~\cite{EG04} with $j_1=3, j_2=2$, yields
$T(n,m,5) = O( \alpha m \cdot m) = O(\alpha \cdot m^2)$. 
Observe however, that $O(\alpha^3 \cdot m) = O(\alpha \cdot m^2)$, so the above
upper bound does not beat that of the algorithm of Chiba and Nishizeki.

\item $\ell=6$. The general case derivation yields a runtime of  $O(m^{\omega})$. 

\item $\ell=7$. The refined triangle method~\cite{EG04} with $\ell_1=2$, $\ell_2=2$, $\ell_3=3$, leads to
  rectangular matrix multiplication $[m \times m] \cdot [m \times \alpha m]$ or
  $[m \times m] \cdot [m \times m^{3/2}]$ in the worst case. 
  Since $\omega(1,1.5,1) \leq 2.795$, it follows that $T(n,m,7) = O(m^{2.795})$.

\item $\ell=8$.
  By~\cite[Thm.~2]{EG04}, it takes $O(m^{\beta(8)/2})$ time, where $\beta(8)=\omega(2,3,3)$.
  According to~\cite[Table~1]{ADW+25},
  $\omega(1,0.6,1)\leq 2.0924$ and $\omega(1,0.7,1)\leq 2.1528$.
  By the convexity of $\omega(x,y,z)$ (see~\cite{LR83}),
  we have $\beta(8)=\omega(3, 2, 3) \leq \omega(1,0.6,1)+2\omega(1,0.7,1) \leq 6.398$.
  Thus, $T(n,m,8) = O(m^{\beta(8)/2})=O(m^{3.199})$. 

\item $\ell=9$. The general case derivation yields a runtime of $O(\alpha^\omega \cdot m^\omega)$.

\item $\ell=10$.  By~\cite[Thm.~2]{EG04}, it takes $O(m^{\beta(10)/2})$ time, 
  where $\beta(10) = \omega(3,3,4)$.
  Since according to~\cite[Table~3]{GU18},
  $\omega(1,1.3,1) \leq 2.6217$ and $\omega(1,1.4,1)\leq 2.7084$,
  by the convexity of $\omega(x, y, z)$, we have
  $\omega(3,3,4)=\omega(3,4,3) = 3\omega(1, 1.3*2/3+1.4/3, 1) \leq 2\omega(1,1.3,1)+\omega(1,1.4,1) \leq 7.9518$.
  Thus, $T(n,m,10)=O(m^{\beta(10)/2}) = O(m^{3.976})$.
Example: for graphs with $\alpha=O(m^{3/8})$, $K_{10}$'s can be counted/detected in $O(m^{3.636})$ time,
by the last entry in the right column, whereas the other entries give $O(m^{3.976})$ time or more. 

 \item $\ell=11$.  By~\cite[Thm.~2]{EG04}, it takes $O(m^{\beta(11)/2})$ time, 
  where $\beta(11) = \omega(3,4,4)$.
 By~\cite[Table~1]{VXXZ24}, we have $\omega(1,0.75,1) \leq 2.1863$, which yields 
 $\omega(3,4,4) = 4 \omega(1,0.75,1) \leq 4 \times 2.1863 \leq 8.746$.
 Consequently, $T(n,m,11) = O(m^{\beta(11)/2}) = O(m^{4.373})$.

\item $\ell=12$. The general case derivation yields a runtime of $O(\alpha^{2\omega} \cdot m^\omega)$. 
  Example: for graphs with $\alpha=O(m^{0.4})$, $K_{12}$'s can be counted/detected in 
  $O(\alpha^{2\omega} \cdot m^\omega) = O(m^{1.8 \omega}) = O(m^{4.269})$ time, whereas the other entries
  give $O(m^{4.744})$ time or more. 

\item $\ell=13$.  By~\cite[Thm.~2]{EG04}, it takes $O(m^{\beta(13)/2})$ time, 
  where $\beta(13) = \omega(4,4,5)$. 
  By~\cite[Table~3]{GU18}, we have $\omega(1,5/4,1) \leq 2.58$, which yields 
  $\omega(4,4,5) = 4 \omega(1,5/4,1) \leq 4 \times 2.58=10.32$.
  Consequently, $T(n,m,13) = O(m^{\beta(13)/2}) = O(m^{5.16})$. 

\item $\ell=14$.  By~\cite[Thm.~2]{EG04}, it takes $O(m^{\beta(14)/2})$ time, 
  where $\beta(14) = \omega(4,5,5) = \omega(5,4,5)$. 
  By~\cite[Table~1]{ADW+25}, we have $\omega(1,0.8,1) \leq 2.2207$, which implies
  $\omega(5,4,5) = 5 \omega(1,0.8,1) \leq 5 \times 2.2207=11.1035$,
  thus  $T(n,m,14) = O(m^{\beta(14)/2}) = O(m^{5.552})$. 

\item $\ell=15$. The general case derivation yields a runtime of  $O(\alpha^{3\omega} \cdot m^\omega)$.

\item $\ell=16$.  By~\cite[Thm.~2]{EG04}, it takes $O(m^{\beta(16)/2})$ time, 
  where $\beta(16) = \omega(5,5,6)$. 
  By~\cite[Table~1]{VXXZ24}, we have $\omega(1,6/5,1) \leq 2.5351$, which implies
  $\omega(5,5,6) = 5 \omega(1,6/5,1) \leq 5 \times 2.5351 \leq 12.676$,
  thus  $T(n,m,16) = O(m^{\beta(16)/2}) = O(m^{6.338})$. 

\end{enumerate}

\paragraph{The edge count method.} This method combines the extended triangle method with
runtime upper bounds of algorithms for the ``sparse'' case of the detection problem.
We illustrate the method for $\ell=4 j$, with $j \geq 2$.
Detecting the presence of a $K_\ell$ amounts to detecting the presence of a $K_4$
in a graph $H$ with $O(\alpha^{j-2} m)$ vertices corresponding to the $K_j$'s in $G$, where two vertices in $H$
are adjacent if they span $2j$ vertices of $G$ that form a $K_{2j}$. There are $k=O(\alpha^{2j -2} m)$ $K_{2j}$'s in $G$,
and so by the result of~\cite{KKM00}, a $K_4$ in $H$ can be detected in
$O(k^{(\omega+1)/2})=O\left( \alpha^{(j-1) (\omega+1)} \cdot m^{\frac{\omega+1}{2}} \right)$ time.

Consider three examples: $j=2,3,4$.
For $j=2$, the new bound  $ O\left( \alpha^{\omega+1} \cdot m^{\frac{\omega+1}{2}} \right)$ is better
than the earlier bound:. Indeed, it is easily seen that
$ \alpha^{\omega+1} \cdot m^{\frac{\omega+1}{2}} = O(\alpha^2 m ^\omega)$
over the entire range of $\alpha$. 
For $j=3,4$, the new bounds are better than the earlier bounds for $\alpha \ll m^{\frac{\omega -1}{4}}$.
Indeed,  $\alpha^{2(\omega+1)} \cdot m^{\frac{\omega+1}{2}} \ll \alpha^{2 \omega} \cdot m^\omega$
and $\alpha^{3(\omega+1)} \cdot m^{\frac{\omega+1}{2}} \ll \alpha^{3\omega + 1} \cdot m^\omega$
for $\alpha \ll m^{\frac{\omega -1}{4}}$. 

These examples are not exhaustive. For instance, one can also apply the edge count method with $\ell= 3 j$
and use the $O(m^{2\omega/(\omega+1)}) =O(m^{1.407})$ time algorithm for triangle detection~\cite{AYZ97}.

\section{Conclusion}   \label{sec:conclusion}

We conclude by restating a few open problems in the area of finding small complete subgraphs.
Recall  that triangle detection can be done in $O(n^\omega)$ time by the algorithm of Itai and Rodeh.

\begin{enumerate} \itemsep 1pt
 
\item Can $K_3$ detection be done in $O(n^2)$ time?

\item Is $K_3$ detection as difficult as Boolean matrix multiplication?
 Initially posed by Alon, Yuster, and Zwick~\cite{AYZ95}, it also appears as 
 in Woeginger's survey on exact algorithms~\cite{Woe08} (as Question 4.3\,(c)). 
V. V. Williams and R. Williams showed that if any of the two problems admits a truly subcubic
\emph{combinatorial} algorithm then also the other one does~\cite{VW18}.

\item Can $K_{10}$ detection be accomplished in $O(n^{7.5})$ time?
  (Question 4.3\,(b) in Woeginger's survey on exact algorithms~\cite{Woe08})
  We add our own version: Can $K_{100}$ detection be accomplished in $O(n^{75})$ time?

\end{enumerate}

\paragraph{Acknowledgment.}
The authors thank an anonymous reviewer for communicating the short folklore proof of Lemma~\ref{lem:F}.

\end{document}